\let\originalleft\left
\let\originalright\right
\renewcommand{\left}{\mathopen{}\mathclose\bgroup\originalleft}
\renewcommand{\right}{\aftergroup\egroup\originalright}
\theoremstyle{plain}
\newtheorem{theorem}{Theorem}
\newtheorem{lemma}{Lemma}
\newtheorem{prop}{Proposition}
\theoremstyle{definition}
\newtheorem{defn}{Definition}
\theoremstyle{remark}
\newtheorem{remark}{Remark}
\newtheorem*{remark*}{Remark}
\newtheorem{example}{Example}
\newcommand{\E}{\mathbb{E}}
\newcommand{\R}{\mathbb{R}}
\newcommand{\N}{\mathbb{N}}
\renewcommand{\vec}[1]{\boldsymbol{#1}}
\newcommand{\ra}{\rightarrow}
\newcommand{\bround}[1]{\Big( #1 \Big)}
\newcommand{\sq}[1]{\left[ #1 \right]}
\newcommand{\bsq}[1]{\Big[ #1 \Big]}
\newcommand{\br}[1]{\left\{ #1 \right\}}
\newcommand{\innero}[1]{\langle #1 \rangle}
\newcommand{\ddd}{,\dots,}
\newcommand{\nn}{\nonumber \\}
\newcommand\mdots{\makebox[1em][c]{.\hfil.\hfil.}}
\begin{document}

\title{\textbf{Beyond the I-MMSE relation: Derivatives of mutual information in Gaussian channels}}
\author{Minh-Toan Nguyen \\ GIPSA-lab, Grenoble Alpes University }
\date{}

\maketitle

\begin{abstract}
The I-MMSE formula connects two important quantities in information theory and estimation theory: the mutual information and the minimum mean-squared error (MMSE). It states that in a scalar Gaussian channel, the derivative of the mutual information with respect to the signal-to-noise ratio (SNR) is one-half of the MMSE. Although any derivative at a fixed order can be computed in principle, a general formula for all the derivatives is still unknown. In this paper, we derive this general formula for vector Gaussian channels. The obtained result is remarkably similar to the classic cumulant-moment relation in statistical theory.
\end{abstract}

\section{Introduction}
Consider the following Gaussian channel
\begin{align}\label{sgc}
    Y = \sqrt{\lambda} X + Z,
\end{align}
with output $Y$, input (signal) $ X$ and a standard Gaussian noise $Z$ that is independent of $X$. The non-negative parameter $\lambda$ is called the signal-to-noise ratio (SNR). Let $I_X(\lambda) = I(X;Y)$ be the mutual information between the input $X$ and the output $Y$.  This quantity is linked to the minimum mean-squared error (MMSE) for estimating $X$ from $Y$ by the fundamental I-MMSE formula \cite{guo2005mutual}
\begin{align}
    I'_X(\lambda) = \frac{1}{2} \text{MMSE}(\lambda),
\end{align} 
where
\begin{align}
    \text{MMSE}(\lambda) &= \min_{f \text{ measurable} } \E[(X-f(Y))^2]  \\
    &= \E[(X-\E[X|Y])^2].
\end{align}
Higher derivatives of $I_X(\lambda)$ are also given in \cite{guo2011estimation}. The key idea behind these formulas is the incremental channel approach which reduces the calculation of $I_X^{(k)}(\lambda)$ for any positive $\lambda$ to that of $I_X^{(k)}(0)$. With methods proposed in \cite{guo2005mutual} and \cite{guo2011estimation}, the $k$-th derivatives at zero can be calculated for any $k$, resulting in increasingly complex formulas as $k$ grows. Moreover, \cite{ledoux2016heat} derived a recursive way to compute the derivatives. However, a general formula for all $k$ is currently unknown.

A slightly more general problem is computing the expansion in $\lambda$ of $H(Z_{\lambda})-H(Z)$, called the \emph{neg-entropy} of $Z_\lambda$, where $H(\cdot)$ denotes the entropy and $Z_{\lambda}$ converges in law to the standard normal random variable $Z$ when $\lambda \ra 0$. Some examples of $Z_{\lambda}$ are
\begin{align}
 \sqrt{\lambda} X + Z,\\
\sqrt \lambda X + \sqrt{1-\lambda}Z, \\
 (X_1+\dots+X_n)/\sqrt{n},
\end{align}
where in the last example, $X$ is a random variable with zero mean and unit variance, $X_1 \ddd X_n$ are i.i.d as $X$ and $\lambda = n^{-1/2}$. Of these three examples, the first is equivalent to computing higher derivatives $I_X^{(k)}(0)$, the second is for analyzing the leakage of a protected message in \cite{rioul2021cumulant}, and the third is for approximating the neg-entropy in Independent Component Analysis \cite{comon1994independent} \cite{hyvarinen2000independent}. 

Results for the scalar Gaussian channels generalize to linear vector Gaussian channels. Within this general model and with respect to arbitrary parameters of the model, \cite{palomar2005gradient} and \cite{payaro2009hessian} obtained closed-form expressions for the gradient and the Hessian of the mutual information. Derivatives of the mutual information up to second order are important in proving MMSE crossing properties of parallel Gaussian channels \cite{bustin2012mmse}.

In this work, we derive a general formula for all higher derivatives of the mutual information with respect to the SNRs in a vector Gaussian channel. The obtained formula bears a striking similarity to the classical cumulant-moment relation \cite{speed1983cumulants}, which can also be derived quickly using methods proposed in this paper.

Our work relies on two key components: the reduction of multiple Gaussian channels with identical signals into a single channel without information loss, and the non-rigorous replica method originated from the physics of disordered systems \cite{mezard1987spin}. The replica method also gives a quick derivation of the cumulant-moment formula (Section \ref{cu}).

Several results in this paper, obtained through the non-rigorous replica method, still require rigorous proofs with explicit assumptions. We also make frequent exchange of integrals, expectations and derivatives in the calculations, which are only valid under certain regularity conditions. Currently, we assume that every function and random variable introduced is sufficiently regular, so that operations performed on them are meaningful.

The paper is structured as follows. Section \ref{res} contains the statement of the main result and some of its consequences. In Section \ref{tools}, we present the main tools that are used in our work. The proof of the main result and of other claims made in the paper are given in Section \ref{pf}. 

\textbf{Notation.} 
For any $n \in \N$, the set $\br{x \in \N, 1 \leq x \leq n}$ is denoted as $[n]$ and the multiset $\br{1,1,\ddd,n,n}$, where each elements of $[n]$ is repeated twice, is denoted as $[n]_2$. For vectors $\vec x = (x_1 \ddd x_n)$ and $\vec y = (y_1 \ddd y_n)$ of the same dimension, $\vec x\odot \vec y \equiv (x_1y_1 \ddd x_ny_n)$.

\section{Statement of results}\label{res}
Our results will be stated in terms of multisets and partitions. A \emph{multiset} is a collection of elements where repetitions are allowed, and the arrangement of elements is disregarded. A \emph{partition} of a multiset is a way of dividing it into non-empty multisets, or \emph{blocks}, without taking into account the order of the blocks. If a partition $\pi$ consists of blocks $B_1 \ddd B_k$, we write $\pi = (B_1 \ddd B_k)$ and denote $|\pi| = k$, the number of blocks of $\pi$. A partition is \emph{diverse} if the elements in each of its block are distinct. For example, the partition $ (\br{1,2},\br{1,2})$ of the multiset $ \br{1,1,2,2}$ is diverse while the partition $ (\br{1,1},\br{2,2})$ is not.

For a diverse partition $\pi$ of the multiset $[n]_2$, $s(\pi)$ is defined as the number of \emph{twins} of $\pi$. Here, a twin of $\pi$ is a pair of identical blocks in $\pi$.  For example, $s(\pi) = 2$ for $\pi = (\br{1,2}, \br{1,2}, \br{3,4}, \br{3,4})$ and $s(\pi)=0$ for $\pi= (\br{1,2,3,4}, \br{1,2}, \br{3,4})$. Note that each block in a diverse partition of $[n]_2$ appears at most twice.

With the basic notions of multisets and partitions, let us define the following important object that will allow us to express higher derivatives of the mutual information in a compact way.

\begin{defn}
The form $\tau$, defined with any $n\geq 1$ arguments, is given by:
\begin{align}\label{kk}
&\tau(X_1 \ddd X_n) \nn
\equiv &\sum_{\pi} \frac{(-1)^{|\pi|-1} (|\pi|-2)!}{2^{s(\pi)}}  \, \E[X_{B_1}] \dots \E[X_{B_{|\pi|}}],
\end{align}
where $X_1 \ddd X_n$ are random variables, the sum is taken over all diverse partitions $\pi = (B_1 \ddd B_{|\pi|})$ of $[n]_2$ and $X_B = \prod_{i \in B} X_i$.
\end{defn}

\begin{defn}
The form $\bar \tau$ with $n$ arguments are defined by the same combinatorial sum in (\ref{kk}), except that the sum is over all diverse partitions $\pi$ of $[n]_2$ with all block sizes greater than one.
\end{defn}

\begin{defn}
$ \tau(\,\cdot\,|Y)$ and $\bar \tau(\,\cdot\,|Y)$ where $ Y$ is a random variable or an event, are defined by replacing the expectations $ \E[\,\cdot\,]$ in the definitions of $\tau$ and $\bar \tau$ by $ \E[\,\cdot\,|Y]$.
\end{defn}

\begin{example}\label{lkj}
The multiset $\br{1,1,2,2}$ has the following diverse partitions
\[
(\br{1,2}, \br{1,2}),\quad (\br{1,2}, \br{1}, \br{2})
\]
\begin{align}\label{part}
(\br{1}, \br{1}, \br{2}, \br{2}).
\end{align}
For the partition $(\br{1,2}, \br{1,2})$, we have $|\pi| = 2$ and $s(\pi)=1$, so the coefficient in (\ref{kk}) that corresponds to this partition is $(-1)^{|\pi|-1} (|\pi|-2)!2^{-s(\pi)} = -1/2$.
For the next two partitions, $(|\pi|, s(\pi))$ is given by $(3,0)$ and $(4,2)$ respectively. Therefore
\begin{align}
\tau(X_1, X_2)
= &-\frac{1}{2}\E[X_1 X_2]^2 + \E[X_1 X_2] \E[X_1] \E[X_2] \nn
&- \frac{1}{2} \E[X_1]^2 \E[X_2]^2,
\end{align}
which can be written compactly in term of the covariance as $-\frac{1}{2} \text{Cov}(X_1,X_2)^2$. On the other hand, since only the first partition in (\ref{part}) has all the block sizes greater than one, we have
\begin{align}\label{tau2}
    \bar \tau(X_1, X_2) = -\frac{1}{2}\E[X_1 X_2]^2.
\end{align}
\end{example}

\vspace{.5cm}

The form $\tau$ is remarkably similar to joint cumulants. Recall that the joint cumulant of random variables $X_1 \ddd X_n$ is defined as
\begin{align}
    \kappa(X_1 \ddd X_n) = \partial_{\lambda_1} \dots \partial_{\lambda_n} \psi_{\vec X}(\vec \lambda)|_{\vec \lambda = \vec 0},
\end{align}
where
\begin{align}
    \psi_{\vec X}(\vec \lambda) = \log \E e^{\innero{\vec \lambda, \vec X}}.
\end{align} 
It is clear from the definition that the joint cumulant does not depends on the order of its arguments. The cumulant can be computed from the moments by the classical cumulant-moment formula \cite{speed1983cumulants}:
\begin{align}\label{mc}
&\kappa(X_1 \ddd X_n) \nn 
= &\sum_{\pi} (-1)^{|\pi|-1}(|\pi|-1)! \, \E[X_{B_1}] \dots \E[X_{B_{|\pi|}}],
\end{align}
where the sum is over all partitions $ \pi = (B_1\ddd B_{|\pi|})$ of $[n]$. The joint cumulant is multilinear and $ \kappa((X_i)_{i \in [n]}) = 0$ if $[n]$ can be divided into two non-empty sets $I$ and $J$ such that $ (X_i)_{i \in I}$ and $ (X_j)_{j \in J}$ are independent.

\begin{prop} \label{lt}
$ $

a) $ \tau$ and $\bar \tau$ are multiquadratic.

b) $ \tau((X_i)_{i \in [n]}) = 0$ if $[n]$ can be divided into two disjoint, non-empty sets $I$ and $J$ such that $ (X_i)_{i \in I}$ and $ (X_j)_{j \in J}$ are independent.

c) The forms $\tau$ and $\bar \tau$ are related by the following identity
\begin{align}\label{ttau}
\tau(X_1 \ddd X_n) = \bar \tau(X_1 - \E[X_1] \ddd X_n - \E[X_n]).
\end{align}
\end{prop}
\noindent Here, a \emph{quadratic form} on a vector space $V$ is a function $f: V \ra \R$ such that there exists a bilinear form $ \phi$ such that $f(v) = \phi(v,v)$ for all $v \in V$. A \emph{multiquadratic form} is quadratic in each of its arguments.

While part a of Proposition \ref{lt} is rather obvious, part b and c are highly non-trivial, as proving them directly from the definitions of $\tau$ and $\bar \tau$ will require delving into intricate combinatorial details. However, they can be proved easily using the connection between $\tau$ and mutual information, as we will see in Section \ref{pf}.

The main finding of this study can be stated as follows:
\begin{theorem}\label{main}
Let $\vec X = (X_1 \ddd X_n)$ be a random vector in $\R^n$. Consider the vector Gaussian channel
\begin{align}
\vec Y = \sqrt{\vec \lambda}\odot \vec X + \vec Z,
\end{align}
where $\vec \lambda \in \R_+^n$, $\vec Z$ is independent of $ \vec X$ and follows the standard normal distribution in $\R^n$. Let $I_{\vec X}(\vec \lambda) = I(\vec X; \vec Y)$. Then, for non-negative integers $k_1 \ddd k_n$ such that $k_1+\dots+k_n \geq 2$, we have
\begin{align} 
&\partial_{\lambda_1}^{k_1} \mdots \partial_{\lambda_n}^{k_n} I_{\vec X}(\vec \lambda) \nn 
=&\E[\tau(\underbrace{X_1, \mdots, X_1}_{k_1}, \mdots, \underbrace{X_n, \mdots, X_n}_{k_n} | \vec Y)] \label{jj} \\
=& \E[\bar \tau(\underbrace{\bar X_1, \mdots, \bar X_1}_{k_1}, \mdots, \underbrace{\bar X_n, \mdots, \bar X_n}_{k_n} | \vec Y)] \label{jj1},
\end{align}
where $ \bar X_i = X_i - \E[X_i| \vec Y]$.
\end{theorem}

\begin{remark}
The case $k_1+\dots+k_n =1$ in Theorem \ref{main} is covered by the following result
\begin{align}\label{og}
    \partial_{\lambda_i} I_{\vec X}(\vec \lambda) = \E[(X_i - \E[X_i|\vec Y])^2],
\end{align}
which is a direct consequence of equation (4) in \cite{palomar2005gradient}.
\end{remark}

\begin{remark}
The forms $\tau$ and $\bar \tau$ with $k$ arguments allow us to write down the formulas for all $k$-th derivatives of $I_{\vec X}(\vec \lambda)$, i.e. all the derivatives of the form $\partial_{\lambda_1}^{k_1} \mdots \partial_{\lambda_n}^{k_n} I_{\vec X}(\vec \lambda)$ with $k_1+\dots+k_n=k$.
\end{remark}

Next, we will give some examples for Theorem \ref{main} and recover some results from \cite{guo2011estimation}.

\begin{example}\label{ex:der2}
With the setting and the notations of Theorem \ref{main}, from (\ref{tau2}) and (\ref{jj1}), we have
\begin{align}
    \partial_{\lambda_i} \partial_{\lambda_j} I_{\vec X}(\vec \lambda) = -\frac{1}{2}\E[\, \E[\, \bar X_i \bar X_j | \vec Y \,]^2\, ]
\end{align}
for all $i,j \in [n]$.
\end{example}

\begin{example}\label{ex:der3}
The multiset $\br{1,1,2,2,3,3}$ has the following partitions with all block sizes greater than one
\begin{align}
(\br{1,2,3},\br{1,2,3}), \quad (\br{1,2}, \br{2,3}, \br{3,1}),
\end{align}
which yields the following formula
\begin{align}
    &\bar \tau(X_1, X_2, X_3) \nn
    =&\E[X_1 X_2]\E[X_2 X_3] \E[X_3 X_1] - \frac{1}{2} \E[X_1 X_2 X_3]^2.
\end{align}
From this and (\ref{jj1}), with the setting and the notations of Theorem \ref{main}, we have
\begin{align}
\partial_{\lambda_i} \partial_{\lambda_j} \partial_{\lambda_k} I_{\vec X}(\vec \lambda) &= \E[\bar X_i \bar X_j|\vec Y]\,\E[\bar X_j \bar X_k|\vec Y]\, \E[\bar X_k \bar X_l|\vec Y] \nn
&- \frac{1}{2} \E[\bar X_i \bar X_j \bar X_k|\vec Y]^2,
\end{align}
for any $i,j,k \in [n]$. 
\end{example}

\begin{example} \label{ex:der4}
We have
\begin{align}
	&\bar \tau(X_1, X_2, X_3, X_4) = \nn
	&-2(\E[X_1X_2]\E[X_2 X_3] \E[X_3 X_4]\E[X_4 X_1]+ \text{two other terms}) \nn
	&-\frac{1}{2} (\E[X_1 X_2]^2 \E[X_3 X_4]^2 + \text{two other terms}) \nn
	&+\E[X_1 X_2]\E[X_1 X_3 X_4] \E[X_2 X_3 X_4] + \text{five other terms} \nn
	&+\E[X_1 X_2 X_3 X_4]\E[X_1 X_2]\E[X_3 X_4] + \text{two other terms}\nn
	&-\frac{1}{2}\E[X_1 X_2 X_3 X_4]^2,
\end{align}
from which we can easily write the general formula for fourth derivatives of the function $I_{\vec X}(\vec \lambda)$ in Theorem \ref{main}.
\end{example}

\begin{example}
From Examples \ref{ex:der2}, \ref{ex:der3} and \ref{ex:der4}, we recover the following results of \cite{guo2011estimation} for the scalar Gaussian channel (\ref{sgc}):
\begin{align*}
I_X^{(2)}(\lambda) &= \frac{1}{2} \E \sq{-M_2^2}\\
I_X^{(3)}(\lambda) &= \frac{1}{2} \E \sq{ 2 M_2^3 - M_3^2 } \\
I_X^{(4)}(\lambda) &= \frac{1}{2} \E \sq{ -15 M_2^4 + 12 M_3^2 M_2 + 6 M_4 M_2^2 -  M_4^2 },
\end{align*}
where
\begin{align*}
    M_k = \E[ (X - \E[X|Y])^k | Y ].
\end{align*}
\end{example}

\begin{remark}
    To list all diverse partitions in the expansion of $ \tau$ or $\bar \tau$, it is useful to take into account the one-to-one correspondence between diverse partitions and loop-free graphs (graphs with no vertex connecting to itself). Let $ \pi$ be a diverse partition of the multiset $[n]_2$. Each $ i \in [n]$ must belong to two different blocks of $\pi$, and we connect these two blocks by an edge labeled by $ i$. For example, the the partition with blocks $\br{1,2}$, $ \br{3,4}$, $\br{1,2,3,4}$ corresponds to the following graph:

\begin{center}
    \begin{tikzpicture}
        \coordinate[circle, inner sep=1pt, fill=black](1) at (0,0);
        \coordinate[circle, inner sep=1pt, fill=black](2) at (1,0);
        \coordinate[circle, inner sep=1pt, fill=black](3) at (2,0);
        \draw (1) to[bend left=30] node[above]{1} (2);
        \draw (1) to[bend right=30] node[below]{2} (2);
        \draw (2) to[bend left=30] node[above]{3}(3);
        \draw (2) to[bend right=30] node[below]{4} (3);
\end{tikzpicture}
\end{center}
Conversely, given a loop-free graph with edges labeled by $[n]$, we can recover the corresponding diverse partition by looking at the edges that connect to each vertex.

The lines in the expansion in Example \ref{ex:der4} correspond to the graphs in Figure \ref{fig:gr1}. Moreover, the terms in each line correspond to different ways of labeling the edges of the associated graph. For example, the six terms in the third line of the expansion corresponds to different labelings in Figure \ref{fig:gr2}. Note that two labelings are considered the same if they describe the same partition (Figure \ref{fig:gr3}).
\end{remark}

\begin{figure}[h]
\centering
  \begin{minipage}[t]{0.18\linewidth}
    \centering
    \begin{tikzpicture}[baseline]
        \coordinate[circle, inner sep=1pt, fill=black](1) at (0,-0.1);
        \coordinate[circle, inner sep=1pt, fill=black](2) at (1,-0.1);
        \coordinate[circle, inner sep=1pt, fill=black](3) at (1,0.9);
        \coordinate[circle, inner sep=1pt, fill=black](4) at (0,0.9);
        \draw (1) to (2);
        \draw (2) to (3);
        \draw (3) to (4);
        \draw (4) to (1);
    \end{tikzpicture}
  \end{minipage}\hfill
  \begin{minipage}[t]{0.18\linewidth}
    \centering
    \begin{tikzpicture}[baseline]
        \coordinate[circle, inner sep=1pt, fill=black](1) at (0,0);
        \coordinate[circle, inner sep=1pt, fill=black](2) at (1,0);
        \coordinate[circle, inner sep=1pt, fill=black](3) at (1,0.8);
        \coordinate[circle, inner sep=1pt, fill=black](4) at (0,0.8);
        \draw (1) to[bend left] (2);
        \draw (1) to[bend right] (2);
        \draw (3) to[bend left] (4);
        \draw (3) to[bend right] (4);
    \end{tikzpicture}
   \end{minipage}\hfill
   \begin{minipage}[t]{0.18\linewidth}
    \centering
    \begin{tikzpicture}[baseline]
        \coordinate[circle, inner sep=1pt, fill=black](1) at (0,0);
        \coordinate[circle, inner sep=1pt, fill=black](2) at (1,0);
        \coordinate[circle, inner sep=1pt, fill=black](3) at (0.5,0.866);
        \draw (1) to[bend left] (2);
        \draw (1) to[bend right] (2);
        \draw (2) to (3);
        \draw (3) to (1);
    \end{tikzpicture}
  \end{minipage}
  \begin{minipage}[t]{0.18\linewidth}
    \centering
    \begin{tikzpicture}[baseline]
        \coordinate[circle, inner sep=1pt, fill=black](1) at (0,-0.1);
        \coordinate[circle, inner sep=1pt, fill=black](2) at (1,-0.1);
        \coordinate[circle, inner sep=1pt, fill=black](3) at (0.5, 0.9);
        \draw (3) to[bend left=20] (1);
        \draw (3) to[bend right=20] (1);
        \draw (3) to[bend left=20] (2);
        \draw (3) to[bend right=20] (2);
    \end{tikzpicture}
  \end{minipage}
    \begin{minipage}[t]{0.18\linewidth}
    \centering
    \begin{tikzpicture}[baseline]
        \coordinate[circle, inner sep=1pt, fill=black](1) at (0,-0.1);
        \coordinate[circle, inner sep=1pt, fill=black](2) at (0,0.9);
        \draw (1) to[bend left=20] (2);
        \draw (1) to[bend right=20] (2);
        \draw (1) to[bend left=50] (2);
        \draw (1) to[bend right=50] (2);
    \end{tikzpicture}
  \end{minipage}
\caption{Loop-free graphs with four edges.}
\label{fig:gr1}
\end{figure}
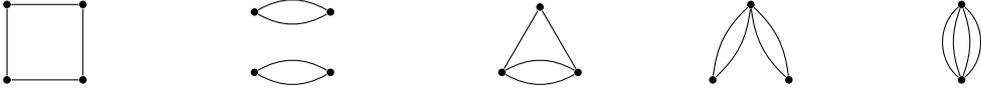

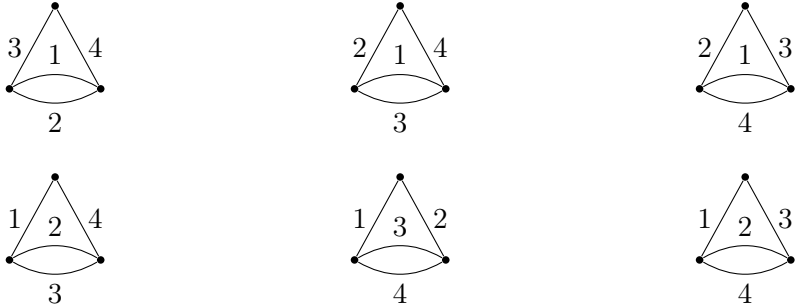
\begin{figure}[h]
	\begin{minipage}[t]{0.3\linewidth}
    \centering
    \begin{tikzpicture}[baseline]
        \coordinate[circle, inner sep=1pt, fill=black](1) at (0,0);
        \coordinate[circle, inner sep=1pt, fill=black](2) at (1.2,0);
        \coordinate[circle, inner sep=1pt, fill=black](3) at (0.6,1.1);
        \draw (1) to[bend left] node[above]{1} (2);
        \draw (1) to[bend right] node[below]{2} (2);
        \draw (1) to node[left]{3} (3) ;
        \draw (2) to node[right]{4} (3);
    \end{tikzpicture}
	\end{minipage}
	\begin{minipage}[t]{0.3\linewidth}
    \centering
    \begin{tikzpicture}[baseline]
        \coordinate[circle, inner sep=1pt, fill=black](1) at (0,0);
        \coordinate[circle, inner sep=1pt, fill=black](2) at (1.2,0);
        \coordinate[circle, inner sep=1pt, fill=black](3) at (0.6,1.1);
        \draw (1) to[bend left] node[above]{1} (2);
        \draw (1) to[bend right] node[below]{3} (2);
        \draw (1) to node[left]{2} (3) ;
        \draw (2) to node[right]{4} (3);
    \end{tikzpicture}
	\end{minipage}
	\begin{minipage}[t]{0.3\linewidth}
    \centering
    \begin{tikzpicture}[baseline]
        \coordinate[circle, inner sep=1pt, fill=black](1) at (0,0);
        \coordinate[circle, inner sep=1pt, fill=black](2) at (1.2,0);
        \coordinate[circle, inner sep=1pt, fill=black](3) at (0.6,1.1);
        \draw (1) to[bend left] node[above]{1} (2);
        \draw (1) to[bend right] node[below]{4} (2);
        \draw (1) to node[left]{2} (3) ;
        \draw (2) to node[right]{3} (3);
    \end{tikzpicture}
	\end{minipage}

	\vspace{1em}

	\begin{minipage}[t]{0.3\linewidth}
    \centering
    \begin{tikzpicture}[baseline]
        \coordinate[circle, inner sep=1pt, fill=black](1) at (0,0);
        \coordinate[circle, inner sep=1pt, fill=black](2) at (1.2,0);
        \coordinate[circle, inner sep=1pt, fill=black](3) at (0.6,1.1);
        \draw (1) to[bend left] node[above]{2} (2);
        \draw (1) to[bend right] node[below]{3} (2);
        \draw (1) to node[left]{1} (3) ;
        \draw (2) to node[right]{4} (3);
    \end{tikzpicture}
	\end{minipage}
	\begin{minipage}[t]{0.3\linewidth}
    \centering
    \begin{tikzpicture}[baseline]
        \coordinate[circle, inner sep=1pt, fill=black](1) at (0,0);
        \coordinate[circle, inner sep=1pt, fill=black](2) at (1.2,0);
        \coordinate[circle, inner sep=1pt, fill=black](3) at (0.6,1.1);
        \draw (1) to[bend left] node[above]{3} (2);
        \draw (1) to[bend right] node[below]{4} (2);
        \draw (1) to node[left]{1} (3) ;
        \draw (2) to node[right]{2} (3);
    \end{tikzpicture}
	\end{minipage}
	\begin{minipage}[t]{0.3\linewidth}
    \centering
    \begin{tikzpicture}[baseline]
        \coordinate[circle, inner sep=1pt, fill=black](1) at (0,0);
        \coordinate[circle, inner sep=1pt, fill=black](2) at (1.2,0);
        \coordinate[circle, inner sep=1pt, fill=black](3) at (0.6,1.1);
        \draw (1) to[bend left] node[above]{2} (2);
        \draw (1) to[bend right] node[below]{4} (2);
        \draw (1) to node[left]{1} (3) ;
        \draw (2) to node[right]{3} (3);
    \end{tikzpicture}
	\end{minipage}
\caption{Different labelings of a graph.}
\label{fig:gr2}
\end{figure}

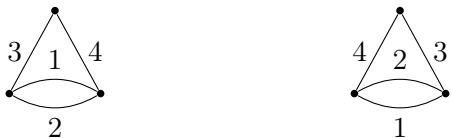
\begin{figure}[h]
\centering
	\begin{minipage}[t]{0.3\linewidth}
    \centering
    \begin{tikzpicture}[baseline]
        \coordinate[circle, inner sep=1pt, fill=black](1) at (0,0);
        \coordinate[circle, inner sep=1pt, fill=black](2) at (1.2,0);
        \coordinate[circle, inner sep=1pt, fill=black](3) at (0.6,1.1);
        \draw (1) to[bend left] node[above]{1} (2);
        \draw (1) to[bend right] node[below]{2} (2);
        \draw (1) to node[left]{3} (3) ;
        \draw (2) to node[right]{4} (3);
    \end{tikzpicture}
	\end{minipage}
	\begin{minipage}[t]{0.3\linewidth}
    \centering
    \begin{tikzpicture}[baseline]
        \coordinate[circle, inner sep=1pt, fill=black](1) at (0,0);
        \coordinate[circle, inner sep=1pt, fill=black](2) at (1.2,0);
        \coordinate[circle, inner sep=1pt, fill=black](3) at (0.6,1.1);
        \draw (1) to[bend left] node[above]{2} (2);
        \draw (1) to[bend right] node[below]{1} (2);
        \draw (1) to node[left]{4} (3) ;
        \draw (2) to node[right]{3} (3);
    \end{tikzpicture}
	\end{minipage}
\caption{Example of two equivalent labelings, describing the same partition $(\br{3,4}, \br{1,2,3}, \br{1,2,4})$ }
\label{fig:gr3}
\end{figure}

\section{Tools}\label{tools}
We present here the main tools of the paper: the equivalent of Gaussian channels and the replica method. The equivalent result implies that we can compute any derivative of the mutual information if we know how to compute expressions of the form $\partial_{\lambda_1} \dots \partial_{\lambda_n} I_{\vec X}(\vec 0)$. The replica method, on the other hand, is used to obtain a combinatorial formula for  $\partial_{\lambda_1} \dots \partial_{\lambda_n} I_{\vec X}(\vec 0)$.
\subsection{A nice property of Gaussian channels}
\begin{prop}\label{reduce}
    A set of Gaussian channels with the same signal $X$ and independent noises is equivalent to a single Gaussian channel with signal $X$ and SNR equal to the sum of individual SNRs.
\end{prop}
\begin{proof}
The proof relies on the concept of sufficient statistics and its connections to information theory \cite{cover1999elements}. Suppose the channels are
\begin{align}\label{osd}
    Y_i = \sqrt{\lambda_i} X+Z_i, \quad i=1 \ddd n,
\end{align}
where $Z_i$ are independent standard Gaussian noises independent of $X$. The posterior distribution of $ X$ given $(Y_1 \ddd Y_n)$ is
\begin{align}
P_{X|\vec Y}(dx) \propto P_X(dx)\exp \bround{ \sum_i \sqrt{\lambda_i} Y_i x - \frac{1}{2} \lambda_i x^2 },
\end{align}
which implies that
\begin{align}
     S := \sum_{i=1}^n \sqrt{\lambda_i}  Y_i
\end{align}
is a sufficient statistics for estimating $X$ from the outputs. Moreover, let $\lambda = \sum_i \lambda_i$, then $S/\sqrt{\lambda}$ can be written as $\sqrt{\lambda}  X + \xi$, where
\begin{align}
    \xi = \sum_{i=t}^n \sqrt{\lambda_i/\lambda} \, Z_i,
\end{align}
is independent of $X$ and follows the standard normal distribution. The lemma follows from the fact that $S/\sqrt{\lambda}$, which contains all the information relevant to $X$ that can be inferred from the outputs, is the output of a Gaussian channel with SNR $ \lambda$.
\end{proof}

\begin{remark}
Proposition \ref{reduce} implies that
\begin{align}
&I(X; \sqrt{\lambda_1}X + Z_1 \ddd \sqrt{\lambda_n}X + Z_n) \nn = &I(X; \sqrt{\lambda_1+\dots+\lambda_n} X + Z),
\end{align}
where $Z,Z_1 \ddd Z_n$ are i.i.d as $\mathcal N(0,1)$ and independent of the random variable $X$. This formula is at the base of the incremental channel approach in \cite{guo2005mutual} and \cite{guo2011estimation}. Our proof based on sufficient statistics gives a cleaner derivation of this result compared to the proofs given in these references, which rely on the language of signal processing.
\end{remark}

\subsection{Replica method}\label{replica}
\subsubsection{A quick introduction to the replica method}\label{cu}
The replica method, which plays a central role in deriving our result, is based on the following simple formula:
\begin{align}
\log Z = \partial_{r=0} Z^r.
\end{align}
In practice, the replica computations are performed assuming $r$ to be an integer, but the final result is obtained by sending $r$ to zero. Despite this lack of rigor, the replica method can derive results in a much more straightforward manner compared to other methods. Originating from statistical physics, the replica method has been widely used to analyze large information processing systems, notably in \cite{guo2005randomly}, \cite{moustakas2003mimo} and \cite{tulino2013support}. In our paper, however, the replica method is applied to a finite system and does not involves either high dimensional limits or the choice of replica ansatz. Instead, the computations are largely combinatorial. Moreover, our main result is not derived entirely from the replica computations, but also from the incremental channel approach.

As a quick illustration of the method, we present here a short and probably new derivation of the classic cumulant-moment relation (\ref{mc}) using the replica method.

By the replica trick, we have
\begin{align}\label{obo}
\psi_{\vec X}(\vec \lambda) =& \partial_{r=0} \bsq{ \E \exp \bround{ \sum_{i=1}^n \lambda_i X_i } }^r.
\end{align}
By treating $r$ as if it were an integer, we can write
\begin{align}\label{obo1}
\psi_{\vec X}(\vec \lambda) =& \partial_{r=0} \E \exp \bround{ \sum_{i=1}^n \lambda_i \sum_{a=1}^r X_{ia} },
\end{align}
where  $(X_{ia})_{i=1}^n$ for $a \in [r]$ are independent random vectors with the same distribution as $(X_1 \ddd X_n)$. Applying $\partial_{\lambda_1} \dots \partial_{\lambda_n}$ at $\vec \lambda = \vec 0$ on both sides of (\ref{obo1}), we obtain
\begin{align}
\kappa(X_1 \ddd X_n) &= \partial_{r=0}\E \prod_{i=1}^n \sum_{a=1}^r X_{ia},
\end{align}
where on the right hand side, we exchanged the operator $\partial_{\lambda_1} \dots \partial_{\lambda_n}$ with $\partial_{r=0} \E$. Expanding the product on the right hand side of this equation and bringing the expectation inside the sum, we have
\begin{align}\label{hjj}
\kappa(X_1 \ddd X_n) &= \partial_{r=0} \sum_{a_1 \ddd a_n=1}^r \E[X_{1a_1} X_{2a_2} \dots X_{na_n}].
\end{align}
Next we will construct the following map from $[r]^n$ to the set of all partitions of $[n]$. Given $(a_1 \ddd a_n) \in [r]^n$, a unique partition $\pi$ of $[n]$ can be obtained by putting $i$ and $j$ in the same block if $a_i = a_j$. Suppose that the blocks of $\pi$ are $B_1 \ddd B_k$, we have  
\begin{align}\label{dfd}
	\E[X_{1a_1} \dots X_{n a_n}] &= \E[X_{B_1}] \dots \E[X_{B_k}],
\end{align}
which follows from the fact that $X_{ia_i}$ and $X_{ja_j}$ are independent if $a_i \neq a_j$.
On the other hand, for each partition $\pi$ of $[n]$ with $k$ blocks, there are
\begin{align}
    P_{r,k} := r(r-1)\dots(r-k+1)
\end{align}
elements in $[r]^n$ that are mapped to $\pi$. Therefore, equation (\ref{hjj}) can be rewritten as
\begin{align}\label{qpl}
    \kappa(X_1 \ddd X_n) = \partial_{r=0} \sum_{\pi} P_{r,k} \, \E[X_{B_1}] \dots \E[X_{B_k}],
\end{align}
where the sum runs over all partitions $\pi = (B_1 \ddd B_k)$ of $ [n]$ (thus $k$ can take any value from $1$ to $n$ as $\pi$ runs over all partitions of $[n]$). The cumulant-moment relation (\ref{mc}) follows from the fact that
\begin{align}
	\partial_{r=0} P_{r,k}  = (-1)^{k-1}(k-1)!.
\end{align}

\begin{remark}
Given $(a_1 \ddd a_n) \in [r]^n$, the number of blocks $k$ in (\ref{dfd}) cannot be larger than $r$. However, in (\ref{qpl}), the sum is over all the partitions of $[n]$. There is no contradiction here because for a partition with $k$ blocks with $k>r$, the coefficient $P_{r,k}$ in (\ref{qpl}) is zero.
\end{remark}

\subsubsection{Mutual information and replicas} Let $X$, $Y$ be random variables with values in $\mathcal X$ and $\mathcal Y$ respectively. Suppose that $\mathcal X$ and $\mathcal Y$ are equipped with measure $\mu$ and $\nu$, called the \emph{underlying measure}. Let $p_X, p_Y, p_{X,Y}$ be the density functions of the random variables $X$, $Y$, $(X,Y)$ with respect the underlying measures $\mu, \nu, \mu \otimes \nu$. Denote $p(y|x) = p_{Y|X}(y|x)$ for simplicity. By definition, the mutual information between $ X$ and $ Y$ is
\begin{align}
	I(X;Y) &= \E \log p(Y|X) - \E \log p_Y(Y).
\end{align}
Note that the mutual information does not depend on the choice of underlying measures. 

Next, we have
\begin{align}
    \E \log p_Y(Y) &= \int \nu(dy) p_Y(y) \log p_Y(y) \\
    &= \partial_{r=1} \int \nu(dy) \, p_Y(y)^r.
\end{align}
Treating $r$ as if it were an integer, we write
\begin{align}
    p_Y(y)^r = \E[p(y|X)]^r = \E[p(y|X_1) \dots p(y|X_r)],
\end{align}
where $X_a$ for $a\in[r]$ are independent and identically  distributed as $X$. As a result, we obtain the following replica representation of $I(X;Y)$
\begin{align}\label{eq:main}
I(X;Y) &= \E \log  p(Y|X) \nn
&- \partial_{r=1} \E \int \nu(dy) p(y|X_1)\dots p(y|X_r).
\end{align}

\section{Proofs}\label{pf}

\begin{lemma}\label{gg}
With the setting of Theorem \ref{main}, for $n \geq 2$,
\begin{align}
\partial_{\lambda_1} \dots \partial_{\lambda_n} I_{\vec X}(\vec \lambda)|_{\vec \lambda = \vec 0} = \tau(X_1 \ddd X_n).
\end{align}
\end{lemma}

\begin{proof}
We will apply formula (\ref{eq:main}) in which $(X, Y)$ is replaced by $(\vec X, \vec Y)$ in this lemma. Let $\vec X_a = (X_{ia})_{i=1}^n$ for $a \in [r]$ be independent random vectors with the same distribution as $\vec X = (X_1 \ddd X_n)$. In the context of (\ref{eq:main}), we have $\mathcal X = \mathcal Y = \R^n$. By choosing $\mu$, the underlying measure of $ \mathcal X$, to be the Lebesgue measure and $\nu$, the underlying measure of $\mathcal Y$, to be the standard Gaussian measure, we have
\begin{align}
    p(\vec y|\vec x) = \exp \bround{ \sum_{i=1}^n \sqrt{\lambda_i} y_i x_i - \frac{1}{2}\lambda_i x_i^2 }.
\end{align}
Therefore,
\begin{align}
\E \log p(\vec Y|\vec X) &= \E[ \sum_{i=1}^n \sqrt{\lambda_i} Y_i X_i - \frac{1}{2}\lambda_i X_i^2 ]\nn
&= \frac{1}{2} \sum_{i=1}^n \lambda_i \E[X_i^2],
\end{align}
and
\begin{align}
&\E \int \nu(d\vec y) p(\vec y|\vec X_1)\dots p(\vec y|\vec X_r) \nn
=& \E \int \nu(d\vec y) \exp \bround{ \sum_{a=1}^r  \sum_{i=1}^n \sqrt{\lambda_i} y_i X_{ia} - \frac{1}{2}\lambda_i X_{ia}^2 } \nn
=& \E \exp \bround{ \sum_{i=1}^n \lambda_i \sum_{1\leq a < b \leq r}  X_{ia} X_{ib} },
\end{align}
where the last equality is obtained by performing the Gaussian integral over $\vec y$. We thus obtain from (\ref{eq:main}) the following replica representation of $I_{\vec X}(\vec \lambda)$
\begin{align} \label{op}
I_{\vec X}(\vec \lambda) &= \frac{1}{2} \sum_{i=1}^n \lambda_i \E[X_i^2] \nn  &- \partial_{r=1} \E \exp \bround{ \sum_{i=1}^n \lambda_i \sum_{1\leq a < b \leq r}  X_{ia} X_{ib} },
\end{align}

The rest of the proof closely follows the derivation of the cumulant-moment formula in Section \ref{cu}. By applying $\partial_{\lambda_1} \dots \partial_{\lambda_n}$ at $\vec \lambda = \vec 0$ on both sides of (\ref{op}), we obtain
\begin{align}
    &\partial_{\lambda_1} \dots \partial_{\lambda_n} I_{\vec X}(\vec \lambda)|_{\vec \lambda = \vec 0} \nn
    = &-\partial_{r=1} \E \bsq{ \prod_{i=1}^n \sum_{1 \leq a <b\leq r} X_{ia} X_{ib}  }
\end{align}
where on the right hand side, we exchanged the operator $\partial_{\lambda_1} \dots \partial_{\lambda_n}$ with $\partial_{r=0} \E$. Expanding the product on the right hand side of this equation and bringing the expectation inside the sum, we have
\begin{align}\label{ukk}
    &\partial_{\lambda_1} \dots \partial_{\lambda_n} I_{\vec X}(\vec \lambda)|_{\vec \lambda = \vec 0} \nn
    = &-\partial_{r=1} \sum_{\substack{1 \leq a_1 < b_1 \leq r, \\ \dots, \\ 1 \leq a_n < b_n \leq r}} \E [ X_{1a_1}X_{1b_1} \dots X_{na_n} X_{nb_n}].
\end{align}
We can rewrite this sum into a sum over diverse partitions of $[n]_2$ as follows. Let us call $a_1, b_1, \ddd a_n, b_n$ the \emph{replica indices}. Since $X_{ia_i}$ and $X_{ja_j}$ are independent if $a_i \neq a_j$, we have 
\begin{align}\label{dfb}
	\E[X_{1a_1} X_{1b_1} \dots X_{na_n} X_{nb_n}] &= \E[X_{B_1}] \dots \E[X_{B_k}],
\end{align}
where $\pi = (B_1 \ddd B_k)$ is the partition of the multiset $[n]_2$ obtained by putting $i,j \in [n]_2$ in the same block if and only if the corresponding replica indices are equal. Since $a_i < b_i$ for all $ i \in [n]$, the partition $\pi$ is diverse. On the other hand, this mapping from replica indices to partitions is many-to-one: each diverse partition $\pi$ of $[n]_2$ with $k$ blocks corresponds to
\begin{align}
2^{-s(\pi)}P_{r,k}
\end{align}
choices of replica indices. Here, the factor $ 2^{-s(\pi)}$ accounts for the fact that some of the blocks of $ \pi$ are identical. As a result, equation (\ref{ukk}) can be rewritten as
\begin{align}
    \partial_{\lambda_1} \dots \partial_{\lambda_n} I_{\vec X}(\vec \lambda)|_{\vec \lambda = \vec 0} = -\partial_{r=1} \sum_{\pi} \frac{P_{r,k}}{2^{s(\pi)}} \, \E[X_{B_1}] \dots \E[X_{B_k}],
\end{align}
where the sum is over all diverse partitions $ \pi = (B_1 \ddd B_k)$ of the multiset $[n]_2$.  The result of the lemma follows from
\begin{align}
    \partial_{r=1} P_{r,k} = (-1)^k (k-2)!.
\end{align}
\end{proof}

\begin{lemma}\label{oi} With the setting of Theorem \ref{main}, for non-negative integers $k_1 \ddd k_n$ whose sum is greater than $1$, we have
\begin{align}\label{hf1}
   \partial_{\lambda_1}^{k_1} \dots \partial_{\lambda_n}^{k_n} I_{\vec X}(\vec \lambda)|_{\vec \lambda = \vec 0}  = \tau(\underbrace{X_1 \ddd X_1}_{k_1}, \dots, \underbrace{X_n \ddd X_n}_{k_n} ).
\end{align}
\end{lemma}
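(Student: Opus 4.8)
The plan is to eliminate the repeated arguments of $\tau$ by unfolding each repeated signal into a block of genuinely distinct channels, to collapse each block back into a single channel with Lemma~\ref{reduce}, and then to read off the identity from the chain rule together with the defining relation (\ref{hh}).

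First I would fix $X_1 \dots X_n$ and non-negative integers $k_1 \dots k_n$ with $m := k_1 + \dots + k_n \ge 2$, and introduce the enlarged system of $m$ Gaussian channels
\[
	Y_{i,j} = \sqrt{\lambda_{i,j}}\, X_i + Z_{i,j}, \qquad i \in [n], \ j \in [k_i],
\]
with independent standard Gaussian noises $Z_{i,j}$, independent of the $X$'s. Write $\vec\Lambda = (\lambda_{i,j})_{i\in[n],\, j\in[k_i]}$ and let $J(\vec\Lambda)$ be the mutual information between the input vector $\big(\underbrace{X_1, \dots, X_1}_{k_1}, \dots, \underbrace{X_n, \dots, X_n}_{k_n}\big)$ and the corresponding outputs. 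Since $m \ge 2$, the second branch of the definition (\ref{hh}) says precisely that
\[
	\tau\big(\underbrace{X_1, \dots, X_1}_{k_1}, \dots, \underbrace{X_n, \dots, X_n}_{k_n}\big) = \Big( \prod_{i=1}^{n} \prod_{j=1}^{k_i} \partial_{\lambda_{i,j}} \Big) J(\vec\Lambda) \Big|_{\vec\Lambda = \vec 0}.
\]

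Next I would apply Lemma~\ref{reduce} to each block $i$ in turn: the $k_i$ channels carrying $X_i$ are equivalent to a single channel carrying $X_i$ with signal-to-noise ratio $\lambda_i := \sum_{j=1}^{k_i} \lambda_{i,j}$, and since the reduced noise of block $i$ is a linear combination of $Z_{i,1}, \dots, Z_{i,k_i}$ only, the reduced noises of distinct blocks involve disjoint sets of the original noises and hence remain independent. Therefore $J(\vec\Lambda) = I_{\vec X}(\lambda_1, \dots, \lambda_n)$ as an identity of functions of $\vec\Lambda$, where now $\vec X = (X_1 \dots X_n)$. Because $\partial \lambda_i / \partial \lambda_{i,j} = 1$ while $\lambda_i$ does not depend on $\lambda_{i',j'}$ for $i' \ne i$, the chain rule gives $\partial_{\lambda_{i,j}} = \partial_{\lambda_i}$ when applied to functions of $(\lambda_1, \dots, \lambda_n)$; iterating this over all $i$ and $j$ yields
\[
	\Big( \prod_{i=1}^{n} \prod_{j=1}^{k_i} \partial_{\lambda_{i,j}} \Big) J(\vec\Lambda) = \big( \partial_{\lambda_1}^{k_1} \dots \partial_{\lambda_n}^{k_n} I_{\vec X} \big)(\lambda_1, \dots, \lambda_n).
\]
Setting $\vec\Lambda = \vec 0$ forces $\lambda_i = 0$ for every $i$, and comparing with the previous display proves the lemma. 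Blocks with $k_i \in \{0,1\}$ require no special treatment, and the hypothesis $\sum_i k_i > 1$ is used only to place us in the $m \ge 2$ branch of (\ref{hh}).

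The argument is essentially bookkeeping once Lemma~\ref{reduce} is available. The one point deserving a word of care is that Lemma~\ref{reduce} must be invoked as an equality of mutual-information \emph{functions} of the signal-to-noise ratios, not merely at a single parameter value, so that differentiating in $\vec\Lambda$ is legitimate; this is fine because the equivalence there is an exact sufficiency statement valid for every value of the parameters. Under the paper's standing integrability assumptions one also has that $I_{\vec X}$ is smooth near the origin, so the mixed partials and their evaluation at $\vec 0$ are meaningful. I do not see a genuine obstacle here — all the substance sits in Lemma~\ref{reduce}, which has already been established.
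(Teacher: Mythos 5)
Your argument is correct and is essentially the paper's own proof: unfold each repeated signal into $k_i$ independent auxiliary channels, collapse each block with Lemma~\ref{reduce}, and take one first-order derivative in each auxiliary SNR at zero, invoking the $m\ge 2$ branch of the definition (\ref{hh}). The only point you gloss over with ``no special treatment'' --- that for indices with $k_i=0$ one must identify $I_{\vec X}(\vec \lambda)$ with those $\lambda_i$ fixed at $0$ with the mutual information of the reduced input vector --- is precisely the paper's separate second case, and it is immediate since a zero-SNR output is pure noise independent of everything.
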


\begin{proof}
First we consider the case where $ k_1 \ddd k_n \geq 1$. Consider the following Gaussian channels with independent noises, with signals 
\begin{align}
    \underbrace{X_1 \ddd X_1}_{k_1} \ddd \underbrace{X_n \ddd X_n}_{k_n},
\end{align}
and SNRs given in order as
\begin{align}
    \lambda_{11} \ddd \lambda_{1 k_1} \ddd \lambda_{n1} \ddd \lambda_{n k_n}.
\end{align}
By Proposition \ref{reduce}, these channels can be reduced without information loss into the following channels with signals
\begin{align}
    X_1 \ddd X_k,
\end{align}
and SNRs respectively as
\begin{align}
    \lambda_{11} + \dots + \lambda_{1 k_1} \ddd \lambda_{n1} + \dots + \lambda_{n k_n}.
\end{align}
From this equivalence, we have
\begin{align}
&I_{X_1 \ddd X_n}(\lambda_{11} + \dots + \lambda_{1 k_1} \ddd \lambda_{n1} + \dots + \lambda_{n k_n}) \nn
=&I_{X_1 \ddd X_1 \ddd X_n \ddd X_n}(\lambda_{11} \ddd \lambda_{1 k_1} \ddd \lambda_{n1} \ddd\lambda_{n k_n}) .
\end{align}
The claim of the lemma follows from the previous equation by taking the first derivative of each variable at zero and using Lemma \ref{gg}.

For the case in which there exists $i \in [n]$ such that $ {k_i=0}$, without loss of generality, suppose that 
$ k_1 \ddd k_m \geq 1$ and $ k_{m+1} \ddd k_n =0$ for some $ m < n$. It is clear that $I_{\vec X}(\lambda_1 \ddd \lambda_m, 0 \ddd 0) = I_{X_1 \ddd X_m}(\lambda_1 \ddd \lambda_m)$. Therefore,
\begin{align}
    &\partial_{\lambda_1}^{k_1} \dots \partial_{\lambda_n}^{k_n} I_{\vec X}(\vec \lambda)|_{\vec \lambda = \vec 0}\nn
	= &\partial_{\lambda_1}^{k_1} \dots \partial_{\lambda_m}^{k_m} I_{\vec X}(\lambda_1 \ddd \lambda_m, 0 \ddd ,0)|_{\lambda_1=\dots=\lambda_m=0} \nn
    = &\partial_{\lambda_1}^{k_1} \dots \partial_{\lambda_m}^{k_m} I_{X_1 \ddd X_m}(\lambda_1 \ddd \lambda_m)|_{\lambda_1 = \dots = \lambda_m =0} \nn
    = &\tau(\underbrace{X_1 \ddd X_1}_{k_1}, \dots, \underbrace{X_m \ddd X_m}_{k_m} ),
\end{align}
which proves the lemma in this case.

\end{proof}

\noindent \textbf{Proof of Proposition \ref{lt}.} 
a) In the definition of $\tau$ given by (\ref{mc}), for each partition $\pi$, the function 
\begin{align}
(X_1 \ddd X_n) \ra \E[X_{B_1}] \dots \E[X_{B_{|\pi|}}]
\end{align}
is multiquadratic, since each $X_i$ appears twice in this expression.  Because a sum of multiquadratic forms (with the same number of arguments) is also multiquadratic, $\tau$ is multiquadratic. Similarly, $\bar \tau$ is multiquadratic.

\vspace{.3cm}

b) If $[n]$ can be divided into two non-empty sets $I$ and $J$ such that $ (X_i)_{i \in I}$ and $ (X_j)_{j \in J}$ are independent, then
\begin{align}
    I_{\vec X}(\vec \lambda) = I_{(X_i)_{i \in I}}((\lambda_i)_{i \in I}) + I_{(X_j)_{j \in J}}((\lambda_j)_{j \in J}),
\end{align}
since the noises are independent. By taking the derivative $ \partial_{\lambda_1} \dots \partial_{\lambda_n}$ at $ \vec \lambda=0$ on both sides of the previous equation and using Lemma \ref{gg}, we obtain
\begin{align}
    \tau(X_1 \ddd X_n)=0.
\end{align} 

c) Let $\bar X_i = X_i - \E[X_i]$. Since mutual information is invariant by translation, we have
\begin{align}
    I_{\vec X}(\vec \lambda) = I_{\vec X - \vec c}(\vec \lambda),
\end{align}
for any $ \vec c \in \R^n$. By this and by Lemma \ref{gg}, we have
\begin{align}
\tau(X_1 \ddd X_n) &= \partial_{\lambda_1} \dots \partial_{\lambda_n} I_{\vec X}(\vec \lambda)|_{\vec \lambda = \vec 0} \quad \nn
&= \partial_{\lambda_1} \dots \partial_{\lambda_n} I_{\vec X - \E[\vec X]}(\vec \lambda)|_{\vec \lambda = \vec 0} \nn
&= \tau(\bar X_1 \ddd \bar X_n) \label{dg3}.
\end{align} 
In the expansion of $\tau(\bar X_1 \ddd \bar X_n)$, for any partition $\pi = (B_1 \ddd B_{|\pi|})$ that contains a block of size one, we have $\E[\bar X_{B_1}] \dots \E[\bar X_{B_{|\pi|}}] = 0$. Therefore, the expansion only involves partitions with all block sizes larger than one. Consequently,
\begin{align}\label{dg4}
\tau(\bar X_1 \ddd \bar X_n) &= \bar \tau(\bar X_1 \ddd \bar X_n).
\end{align}
The result follows from (\ref{dg3}) and (\ref{dg4}).
\qed

\vspace{.5cm}

\noindent \textbf{Proof of Theorem \ref{main}.} We will only prove equation (\ref{jj}) of the theorem, as equation (\ref{jj1}) can be proved similarly. Our proof will follow the same incremental channel approach of \cite{guo2011estimation}. Suppose the following data is given in addition to $\vec Y$,
\begin{align*}
    \vec Y' = \sqrt{\vec \delta} \odot \vec X + \vec Z',
\end{align*}
where the noise $\vec Z'$ is standard Gaussian, independent of $\vec X$ and $\vec Z$. Since the two channels that share the signal $X_i$ with SNRs $\lambda_i $ and $ \delta_i$ can be combined into a single channel with SNR $\lambda_i+\delta_i$, we have
\begin{align}
    I(\vec X; \vec Y, \vec Y') = I_{\vec X}(\vec \lambda + \vec \delta).
\end{align}
Thus,
\begin{align}
    I_{\vec X}(\vec \lambda+\vec \delta) - I_{\vec X}(\vec \lambda) &= I(\vec X;\vec Y, \vec Y') - I(\vec X; \vec Y) \nn
    &= I(\vec X;\vec Y'|\vec Y) \nn
    &= \int P_{\vec Y}(d \vec y) I(\vec X;\vec Y'|\vec Y = \vec y).
\end{align}
Now taking the derivative $\partial_{\delta_1}^{k_1} \dots \partial_{\delta_n}^{k_n}$ at $\vec \delta = 0$ on both sides of this equation and exchange the derivative with the integral, we obtain
\begin{align}\label{lk1}
    &\partial_{\lambda_1}^{k_1} \dots \partial_{\lambda_n}^{k_n} I_{\vec X}(\vec \lambda) \nn
    = &\int P_{\vec Y}(d \vec y) \, \partial_{\delta_1}^{k_1} \dots \partial_{\delta_n}^{k_n} I(\vec X;\vec Y'|\vec Y = \vec y)|_{\vec \delta =\vec 0}.
\end{align}
Let $\vec X^{\vec y} $ be the random variable $\vec X$ conditioned on the event $ \vec Y = \vec y$. Since $\vec Z'$ is independent of $ \vec Y$, we have
\begin{align}
    I(\vec X;\vec Y'|\vec Y = \vec y) &= I(\vec X; \sqrt{\vec \delta}\odot \vec X + \vec Z' | \vec Y = \vec y ) \nn
    &= I(\vec X^{\vec y} ; \sqrt{\vec \delta} \odot \vec X^{\vec y} + \vec Z' ).
\end{align}
From this and (\ref{lk1}), we have
\begin{align}\label{nb}
    &\partial_{\lambda_1}^{k_1} \dots \partial_{\lambda_n}^{k_n} I_{\vec X}(\vec \lambda) \nn
    = &\int P_{\vec Y}(d \vec y) \, \partial_{\delta_1}^{k_1} \dots \partial_{\delta_n}^{k_n}  I(\vec X^{\vec y} ; \sqrt{\vec \delta}\odot \vec X^{\vec y} + \vec Z' )|_{\vec \delta = \vec 0}.
\end{align}
Since $\vec Z'$ is independent of $\vec X$ and $\vec Y$, it is also independent of $ \vec X^{\vec y}$. By Lemma \ref{oi}, we have
\begin{align}
    &\partial_{\delta_1}^{k_1} \dots \partial_{\delta_n}^{k_n}I(\vec X^{\vec y} ; \sqrt{\vec \delta}\odot \vec X^{\vec y} + \vec Z' )|_{\vec \delta = \vec 0} \nn
    =& \tau( \underbrace{X_1^{\vec y} \ddd X_1^{\vec y} }_{k_1} \ddd \underbrace{X_n^{\vec y} \ddd X_n^{\vec y} }_{k_n} ) \nn
    =& \tau( \underbrace{X_{1} \ddd X_{1} }_{k_1} \ddd \underbrace{X_{n} \ddd X_{n} }_{k_n} | \vec Y = \vec y ).
\end{align}
From this and (\ref{nb}) we obtain the equation (\ref{jj}) in the theorem. \qed

\section{Conclusion}
We derived a general formula for the derivatives of the mutual information with respect to the SNRs in vector Gaussian channels, by combining the incremental channel approach with the replica method. The result can be written compactly by a form $\tau$ defined as a combinatorial sum. The form  $\tau$ and the joint cumulants exhibit some remarkable similarities, summarized in the following table:

\begin{table*}[!htb]
\begin{center}
\normalsize
\begin{tabular}{ c | c }\label{tab}
$\kappa$ & $\tau$ \\
 \hline
multilinear & multiquadratic\\
$\kappa(X_1\ddd X_n)= \partial_{\lambda_1}\dots \partial_{\lambda_n}\psi_{\vec X}(\vec \lambda)|_{\vec \lambda=0}$ & $\tau(X_1\ddd X_n)=\partial_{\lambda_1}\dots \partial_{\lambda_n}I_{\vec X}(\vec \lambda)|_{\vec \lambda=0}$ \\
sum over partitions of $[n]$ & sum over diverse partitions of $[n]_2$\\
\hline
\multicolumn{2}{c}{do not depend on the order of arguments}\\
\multicolumn{2}{c}{vanish if the arguments can be divided into two independent parts}\\
\hline
\end{tabular}
\end{center}
\end{table*}

\section*{Acknowledgment}
We are grateful to Steeve Zozor and Walid Hachem for their careful reading and feedback on the manuscript, and Olivier Michel for pointing out the relevant papers on Independence Component Analysis.

\bibliographystyle{siam}
\bibliography{main.bib}

\begin{thebibliography}{10}

\bibitem{bustin2012mmse}
{\sc R.~Bustin, M.~Payaro, D.~P. Palomar, and S.~Shamai}, {\em {On MMSE
  crossing properties and implications in parallel vector Gaussian channels}},
  {IEEE Transactions on Information Theory}, 59 (2012), pp.~818--844.

\bibitem{comon1994independent}
{\sc P.~Comon}, {\em {Independent Component Analysis, a new concept?}}, Signal
  processing, 36 (1994), pp.~287--314.

\bibitem{cover1999elements}
{\sc T.~M. Cover}, {\em {Elements of information theory}}, John Wiley \& Sons,
  1999.

\bibitem{guo2005mutual}
{\sc D.~Guo, S.~Shamai, and S.~Verd{\'u}}, {\em {Mutual information and minimum
  mean-square error in Gaussian channels}}, IEEE Transactions on Information
  Theory, 51 (2005), pp.~1261--1282.

\bibitem{guo2005randomly}
{\sc D.~Guo and S.~Verd{\'u}}, {\em {Randomly spread CDMA: Asymptotics via
  statistical physics}}, {IEEE Transactions on Information Theory}, 51 (2005),
  pp.~1983--2010.

\bibitem{guo2011estimation}
{\sc D.~Guo, Y.~Wu, S.~S. Shitz, and S.~Verd{\'u}}, {\em {Estimation in
  Gaussian noise: Properties of the minimum mean-square error}}, IEEE
  Transactions on Information Theory, 57 (2011), pp.~2371--2385.

\bibitem{hyvarinen2000independent}
{\sc A.~Hyv{\"a}rinen and E.~Oja}, {\em {Independent Component Analysis:
  algorithms and applications}}, Neural networks, 13 (2000), pp.~411--430.

\bibitem{ledoux2016heat}
{\sc M.~Ledoux}, {\em {Heat flow derivatives and minimum mean-square error in
  Gaussian noise}}, IEEE Transactions on Information Theory, 62 (2016),
  pp.~3401--3409.

\bibitem{mezard1987spin}
{\sc M.~M{\'e}zard, G.~Parisi, and M.~A. Virasoro}, {\em {Spin glass theory and
  beyond: An introduction to the replica method and its applications}}, vol.~9,
  World Scientific Publishing Company, 1987.

\bibitem{moustakas2003mimo}
{\sc A.~L. Moustakas, S.~H. Simon, and A.~M. Sengupta}, {\em {MIMO capacity
  through correlated channels in the presence of correlated interferers and
  noise: A (not so) large $N$ analysis}}, {IEEE Transactions on Information
  Theory}, 49 (2003), pp.~2545--2561.

\bibitem{palomar2005gradient}
{\sc D.~P. Palomar and S.~Verd{\'u}}, {\em {Gradient of mutual information in
  linear vector Gaussian channels}}, IEEE Transactions on Information Theory,
  52 (2005), pp.~141--154.

\bibitem{payaro2009hessian}
{\sc M.~Payar{\'o} and D.~P. Palomar}, {\em {Hessian and concavity of mutual
  information, differential entropy, and entropy power in linear vector
  Gaussian channels}}, IEEE Transactions on Information Theory, 55 (2009),
  pp.~3613--3628.

\bibitem{rioul2021cumulant}
{\sc O.~Rioul, W.~Cheng, and S.~Guilley}, {\em {Cumulant expansion of mutual
  information for quantifying leakage of a protected secret}}, in 2021 IEEE
  International Symposium on Information Theory (ISIT), IEEE, 2021,
  pp.~2596--2601.

\bibitem{speed1983cumulants}
{\sc T.~Speed}, {\em {Cumulants and partition lattices I}}, Australian Journal
  of Statistics, 25 (1983), pp.~378--388.

\bibitem{tulino2013support}
{\sc A.~M. Tulino, G.~Caire, S.~Verd{\'u}, and S.~Shamai}, {\em {Support
  recovery with sparsely sampled free random matrices}}, {IEEE Transactions on
  Information Theory}, 59 (2013), pp.~4243--4271.

\end{thebibliography}

\end{document}